\documentclass[5p,times]{elsarticle}

\usepackage{amsmath,amssymb,amsfonts}

\usepackage{amsthm}

\newtheorem{theorem}{Theorem}
\newtheorem{proposition}{Proposition}

\newtheorem{corollary}{Corollary}

\usepackage{algorithmic}
\usepackage{url}
\usepackage{graphicx}
\usepackage{xcolor}
\usepackage{colortbl}
\usepackage{booktabs}
\definecolor{accessblue}{cmyk}{1 0.3 0 0.2}%
\definecolor{greycolor}{cmyk}{0,0,0,.8}

\usepackage{subcaption}
\usepackage{quantikz}
\usepackage{pgfplots}
\pgfplotsset{compat=1.18}
\usepgfplotslibrary{groupplots} 

\usepackage[acronym, nonumberlist, nogroupskip]{glossaries}

\def\BibTeX{{\rm B\kern-.05em{\sc i\kern-.025em b}\kern-.08em
    T\kern-.1667em\lower.7ex\hbox{E}\kern-.125emX}}

\newcommand{\qctrl}[1]{#1^{\hspace{0.1em}\bullet}}
\newcommand{\qtarg}[1]{#1^{\hspace{0.1em}\text{\scalebox{0.6}{$\square$}}}}

\makeglossaries

\newacronym{EPR}{EPR}{Einstein-Podolsky-Rosen}
\newacronym{GHZ}{GHZ}{Greenberger-Horne-Zeilinger}
\newacronym{QFT}{QFT}{Quantum Fourier Transform}
\newacronym{IQFT}{iQFT}{Inverse Quantum Fourier Transform}
\newacronym{QPE}{QPE}{Quantum Phase Estimation}

\usepackage{titlesec}

\titleformat{\paragraph}
{\normalfont\normalsize\itshape}{\theparagraph}{0.2em}{\normalfont\normalsize\itshape}
\titlespacing{\paragraph}
{0pt}{*1}{*1}[0.2pt]
\begin{document}

\begin{frontmatter}
    
\title{Communication-Efficient Distributed Inverse Quantum Fourier Transform}

\author[citius]{F. Javier Cardama}
\author[cesga]{Jorge Vázquez-Pérez}
\author[citius,dec]{Tomás F. Pena}
\author[cesga]{Andrés Gómez}

\affiliation[citius]{organization={Centro Singular de Investigación en Tecnoloxías Intelixentes (CiTIUS)},
            addressline={Universidade de Santiago de Compostela}, 
            city={Santiago de Compostela},
            postcode={15782}, 
            country={Spain}}
\affiliation[dec]{organization={Departamento de Electrónica e Computación},
            addressline={Universidade de Santiago de Compostela}, 
            city={Santiago de Compostela},
            postcode={15782}, 
            country={Spain}}
\affiliation[cesga]{organization={Galicia Supercomputing Center (CESGA)},
            addressline={Avda.\ de Vigo S/N}, 
            city={Santiago de Compostela},
            postcode={15705}, 
            country={Spain}}

\begin{abstract}
The scalability of quantum computing is currently limited by physical, technological, and architectural constraints that hinder the integration of a large number of qubits within a single quantum processor. Distributed quantum computing (DQC) has therefore emerged as a viable alternative, aiming to interconnect multiple smaller quantum processing units (QPUs) to jointly operate on a global quantum state. While this paradigm enables scalable architectures, it introduces significant communication overhead due to the cost of non-local quantum operations across distant nodes. In this work we propose a distributed formulation of the iQFT over a quantum network composed of $P$ nodes, each hosting $Q$ qubits, enabling the execution on a global register of size $n = P \cdot Q$. Furthermore, we introduce a communication-efficient variant based on a threshold-driven pruning strategy, referred to as a \emph{communication horizon}, which exploits the exponentially decreasing significance of controlled-phase rotations to safely omit remote gates with negligible impact. By reducing the number of inter-node quantum interactions, the proposed approach significantly lowers the quantum communication requirements of the distributed iQFT while operating within a strictly bounded approximation error. Crucially, we show that this approach fundamentally alters the scaling of the algorithm: the entanglement resource consumption per node saturates to a constant value, reducing the global communication complexity from quadratic $\mathcal{O}(P^2)$ to linear $\mathcal{O}(P)$. As the iQFT constitutes a critical building block in many quantum algorithms, the techniques presented in this paper directly contribute to improving the practicality and scalability of distributed quantum computation.
\end{abstract}

\begin{keyword}
distributed quantum computing, quantum fourier transform, quantum algorithms, quantum communication.
\end{keyword}
\end{frontmatter}

\section{Introduction}
\label{sec:introduction}
Distributed quantum computing (DQC) has emerged as a promising architectural approach to overcome the scalability limitations of current monolithic quantum processors~\cite{Caleffi2024DistributedSurvey, Barral2025ReviewComputing}, which are fundamentally constrained by qubit noise, decoherence, and the increasing complexity of control and calibration as the system size grows~\cite{Bluhm2019SemiconductorComputing,Preskill2018QuantumBeyond, Cacheiro2025QMIO:System}. Rather than relying on a single large quantum processor, DQC interconnects multiple smaller quantum processing units (QPUs) through quantum entanglement and classical communication channels, enabling the creation of a global quantum state spanning several physical devices~\cite{Wehner2018QuantumAhead, Ferrari2021CompilerComputing, Illiano2022QuantumSurvey}. By entangling qubits across a network of $P$ interconnected QPUs, each hosting $Q$ local qubits, a collective global register of size $n = P \cdot Q$ can be realized. This modular paradigm allows quantum systems to scale beyond the physical limits of individual processors~\cite{Escofet2025OnPerformance,Palesi2025AssessingSystems}.

Regarding quantum computing itself, it represents a fundamental shift in the computational paradigm, with several cornerstone quantum algorithms demonstrating its advantage, being the well known Shor’s integer factorization algorithm~\cite{Shor1994AlgorithmsFactoring} the most representative. Shor's algorithm leverages the Quantum Phase Estimation (QPE)~\cite{Jiang2021AnAlgorithms} algorithm to factor large integers in polynomial time, with the inverse Quantum Fourier Transform (iQFT) playing a central role in it, due to the QPE encoding its final results in the relative phases of quantum states expressed in the Fourier basis. The iQFT acts as a decoding operation that transforms this phase information back into the computational basis, enabling the extraction of classical outcomes through measurement. Consequently, the efficiency and scalability of the iQFT directly impact the performance of the QPE and, therefore, the Shor's algorithm---and all of those employing these oracle as part of their operations.

However, the standard implementation of the iQFT on an $n$-qubit register requires $\mathcal{O}(n^2)$ two-qubit controlled-phase gates and exhibits an all-to-all interaction pattern, as each qubit must, in principle, interact with every other qubit. This quadratic gate complexity and dense connectivity pose significant challenges for both monolithic and distributed quantum architectures. In distributed settings, the cost of non-local two-qubit gates is particularly high, as they require entanglement generation, quantum communication, and classical synchronization between remote nodes~\cite{Almudever2024FromSystems, Ferrari2023AComputing, Cardama2026NetQIR:Computing}.

Because of these demanding communication requirements, the distribution of the QFT and iQFT serves as a benchmark challenge in the field. Standard approaches rely on topology-aware qubit mapping~\cite{Sang2026}, remote-gate batching (such as the partitioning methods proposed by Kaur et al.~\cite{Kaur2025OptimizedQuantumCircuit}), and optimized circuit scheduling to efficiently route inter-node communications~\cite{Bahrani2024, Liu2025}. While these strategies minimize the practical execution time of network operations, they preserve the full logical structure of the exact circuit, meaning they do not reduce the fundamental asymptotic complexity of the communication overhead.

Recently, the theoretical limits of these exact implementations were formalized by Ebnenasir~\cite{Ebnenasir2025LowerBounds}. In this work, the author established the theoretical lower bounds for the communication costs of distributing the exact QFT on clique networks (quantum network with clique topology; i.e., complete graph, where any pair of quantum machines have a direct quantum link between them), rigorously proving that any exact execution necessitates an aggregate communication complexity of $\Omega(P^2)$ across $P$ processing nodes.

In this work, we address these challenges by proposing a communication-efficient distributed implementation of the iQFT. The main contributions of this paper are twofold:

\begin{itemize}
    \item\textbf{Distributed iQFT}: we present a systematic decomposition of the iQFT across a quantum network composed of $P$ nodes, each with $Q$ qubits, enabling the execution of the transform over a global register of size $n = P \cdot Q$.
    \item \textbf{Communication-efficient}: we introduce a communication-optimized variant of the distributed iQFT based on a well-known threshold-driven pruning strategy~\cite{Coppersmith2002ApproximateFourier}. By establishing a maximum communication horizon, this approach successfully reduces the global entanglement complexity of the network from quadratic $\mathcal{O}(P^2)$ to linear $\mathcal{O}(P)$ scaling.
\end{itemize}

The remainder of this paper is organized as follows. Section~\ref{sec:background} reviews the theoretical background of the iQFT and distributed quantum computing primitives. Section~\ref{sec:methods} introduces the proposed distributed iQFT formulation and the communication horizon optimization strategy. Section~\ref{sec:results_discussion} analyzes the impact of the proposed approach in terms of communication cost and accuracy. Finally, Section~\ref{sec:conclusions} conclude the paper and outlines directions for future research.

\section{Background}
\label{sec:background}

\subsection{Inverse Quantum Fourier Transform}
\label{sec:iqft_background}

The \gls{QFT} is a linear unitary transformation analogous to the classical discrete Fourier transform. For a quantum system of $n$ qubits, let $N = 2^n$ be the dimension of the Hilbert space. The QFT maps the orthonormal computational basis states $\{|x\rangle\}_{x=0}^{N-1}$ to a new set of orthonormal states, which we refer to as the \textit{Fourier basis} $\{|\widetilde{x}\rangle\}_{x=0}^{N-1}$. This operation is defined as:

\begin{equation}
    \text{QFT}|x\rangle = |\widetilde{x}\rangle = \frac{1}{\sqrt{N}} \sum_{y=0}^{N-1} e^{2\pi i \frac{xy}{N}} |y\rangle.
    \label{eq:qft_def}
\end{equation}

Each state $|\widetilde{x}\rangle$ in the Fourier basis is an equal superposition of all computational basis states, distinguished only by their relative phases. Consequently, the \gls{IQFT} serves as the decoding operation that maps a state from the Fourier basis back to the computational basis:

\begin{equation}
    \text{QFT}^{-1}|\widetilde{x}\rangle = |x\rangle.
    \label{eq:iqft_mapping}
\end{equation}

Mathematically, the iQFT is the adjoint of the QFT operator ($\text{QFT}^\dagger$). Its action on an arbitrary computational basis state $|y\rangle$ is given by introducing a negative sign in the complex exponential:

\begin{equation}
    \text{QFT}^{-1}|y\rangle = \frac{1}{\sqrt{N}} \sum_{x=0}^{N-1} e^{-2\pi i \frac{xy}{N}} |x\rangle.
    \label{eq:iqft_sum}
\end{equation}

This transformation is fundamental in quantum algorithms such as \gls{QPE} and Shor's algorithm, where the solution is encoded in the relative phases of the qubits (Fourier basis) and must be transformed back to the computational basis to be measured with high probability.

\subsubsection{Circuit Implementation}
The decomposition of the iQFT into a quantum circuit consists of a structured sequence of Hadamard ($H$) gates and two-qubit Controlled-Phase ($CP$) rotations, followed by a SWAP network to reverse the order of the qubits. The specific circuit implementation considered in this work is illustrated in Figure~\ref{fig:iqft_circuit}.

\begin{figure*}
    \centering
    \input{circuits/iqft_rz}
    \caption{Circuit realization of the $n$-qubit Inverse Quantum Fourier Transform (iQFT). The circuit maps the input state from the Fourier basis $|\widetilde{x}\rangle$ to the computational basis $|x\rangle$ using a sequence of Hadamard gates ($H$) and controlled phases $CP(\theta_k)$ where $\theta_k = \pi/2^k$.}    \label{fig:iqft_circuit}
\end{figure*}

The rotation angle applied to a target qubit $\qtarg{i}$, controlled by qubit $\qctrl{i}$, is determined by the logical index difference $k$ between them, denoted as:

\begin{equation}\label{eq:theta_generic_iqft}
    \theta(\qctrl{i},\qtarg{i})=\theta_{\qtarg{i}-\qctrl{i}}= -\frac{\pi}{2^{\qtarg{i}-\qctrl{i}}}.
\end{equation}

$CP(\theta_{k})$, applies a diagonal phase shift depending on the index difference between the qubits ($k=\qtarg{i}-\qctrl{i}$). For the inverse transform, the rotation angle is negative:

\begin{equation}
    CP(\theta_k) = \text{diag}\left(1, 1, 1, e^{-i\frac{\pi}{2^{k}}}\right),
    \label{eq:cp_gate}
\end{equation}

\noindent where $k$ represents index difference between the control and target qubits in the register. The exact implementation requires $\mathcal{O}(n^2)$ such gates. However, as $k$ increases, the rotation angle approaches zero.

\begin{equation}
    \lim_{k\to\infty}\theta_k =0.
\end{equation}

This property is crucial for distributed implementations, as it allows for the truncation of small-angle rotations (approximate iQFT) to reduce inter-node communication, a concept we detail in subsequent sections.

To illustrate the specific connectivity pattern by these operations, Figure~\ref{fig:iqft_detail} describes the interactions between an arbitrary pair qubits, $\qctrl{i}$ as control and qubit $\qtarg{i}$ as target. The angle $\theta$ applied follows the Eq.~\ref{eq:theta_generic_iqft}:

\begin{figure}[h]
    \centering
    \input{circuits/phase-generic}
    \caption{Interaction between the $\qctrl{i}$-th qubit as control and the $\qtarg{i}$-th qubit as target within the iQFT sequence. The diagram highlights the Hadamard gate applied to $\ket{\widetilde{x}_\qctrl{i}}$ and the conditional phase rotation controlled by $\ket{\widetilde{x}_\qctrl{i}}$ acting on $\ket{\widetilde{x}_\qtarg{i}}$.}
    \label{fig:iqft_detail}
\end{figure}

\subsubsection{Gate Complexity and Scaling}
\label{sec:gate_complexity}

The computational cost of the exact iQFT is dominated by the number of two-qubit controlled-phase ($CP$) rotations required to encode the relative phases. For an $n$-qubit register, the standard decomposition involves a sequence of operations where the $j$-th qubit (for $j=1, \dots, n$) acts as a control for $n-j$ target qubits.

The total number of controlled-phase gates, denoted as $N_{CP}$, is determined by the summation of these interactions across the entire circuit:

\begin{equation}
    N_{CP}(n) = \sum_{j=1}^{n-1} j = \frac{n(n-1)}{2}.
    \label{eq:gate_count}
\end{equation}

Consequently, the gate complexity scales quadratically with the system size, i.e., $N_{CP} \in \Theta(n^2)$. While the number of Hadamard gates scales linearly as $\Theta(n)$ and the final SWAP network requires $\Theta(n)$ operations, the quadratic growth of the entangling $CP$ gates presents the primary bottleneck for scalability.

Beyond gate count, the iQFT imposes demanding connectivity constraints on the underlying hardware topology. The algorithm dictates that every qubit $\qctrl{i}$ must interact directly via a two-qubit gate with every other qubit $\qtarg{i}$ (where $\qtarg{i} > \qctrl{i}$) to apply the necessary phase shifts.

From a graph-theoretic perspective, the interaction graph of the iQFT corresponds to a complete graph $K_n$, where an edge exists between every pair of vertices. 

This disparity between the algorithmic requirement (all-to-all) and physical constraints significantly increases the effective circuit depth and the total operation count in realistic implementations.

\subsection{Distributed Quantum Computing: teledata and telegate}
Scaling quantum computation beyond the qubit limits of a single processor requires the interconnection of multiple Quantum Processing Units (QPUs) into a distributed network. In such an architecture, the execution of a monolithic quantum circuit is partitioned across distinct nodes. To realize non-local interactions between qubits residing on physically separated QPUs, the system must rely on quantum communication channels and the distribution of entanglement~\cite{Cirac1999DistributedChannels, Mendes2010SchemesGates,Main2025DistributedLink}.

We assume the network links are capable of generating and sharing Einstein-Podolsky-Rosen (EPR) pairs, specifically the Bell state $|\Phi^+\rangle = \frac{1}{\sqrt{2}}(|00\rangle + |11\rangle)$, between nodes. Utilizing this shared resource, two primary communication primitives are established to support distributed logic: \textit{Teledata}~\cite{Bennett1993TeleportingChannels} and \textit{Telegate}~\cite{Gottesman1999QuantumPrimitive,Gottesman1999DemonstratingOperations}.

\subsubsection{Teledata: state teleportation}
The Teledata strategy, based on the standard quantum teleportation protocol~\cite{Bennett1993TeleportingChannels}, physically relocates the quantum state of a qubit from a source node to a destination node. As depicted in Figure~\ref{fig:teledata_circuit}, this protocol consumes exactly one shared EPR pair and requires one round of classical communication. The fundamental advantage of this primitive is that once the state is reconstructed at the destination, it becomes entirely local; consequently, an arbitrary number of subsequent quantum gates can be executed on the migrated qubit without incurring further inter-node communication penalties. However, if the algorithmic logic demands that the qubit act as a control or target back in its original node later, a ``round-trip'' teleportation is necessary, effectively doubling the entanglement and latency costs.

\subsubsection{Telegate: remote gate application}

\begin{figure}
    \centering
    \input{circuits/teledata}
    \caption{Circuit schematic of the \textit{Teledata} protocol. The quantum state $\ket{\psi_A}$ is teleported from $\text{QPU}_A$ to $\text{QPU}_B$ utilizing a shared EPR pair. Upon receiving the classical measurement outcomes for Pauli corrections ($X, Z$), the state is reconstructed at the destination. This allows the subsequent interaction with $\ket{\phi_B}$ (depicted here as a CZ gate) to be executed as a local operation, avoiding non-local gate overhead.}
    \label{fig:teledata_circuit}
\end{figure}

The Telegate protocol enables the execution of a non-local controlled unitary $U$ between a control qubit in node $A$ and a target qubit in node $B$ without transporting the quantum information of the control qubit itself \cite{Gottesman1999QuantumPrimitive,Gottesman1999DemonstratingOperations}. As illustrated in Figure~\ref{fig:telegate_circuit}, the protocol proceeds in three high-level phases: first, an entanglement stage (Cat-Ent) temporarily extends the superposition of the control across the network to a remote ``proxy'' qubit via a shared EPR pair; second, the intended controlled operation is executed locally at the destination node using the proxy as the control; and finally, a disentanglement stage (Cat-DisEnt) decouples the channel and restores the original control qubit via Pauli corrections. This sequence realizes the remote interaction at a total resource cost of exactly one EPR pair and two bidirectional rounds of classical communication.

\begin{figure}
    \centering
    \input{circuits/telegate}
    \caption{Circuit implementation of the \textit{Telegate} protocol. This primitive executes a non-local controlled-Z gate between a control qubit $\ket{\psi_A}$ in $\text{QPU}_A$ and a target $\ket{\phi_B}$ in $\text{QPU}_B$ using a single shared EPR pair. The process involves an entanglement stage (Cat-Ent) to link the control to the channel, followed by the local interaction, and a disentanglement stage (Cat-DisEnt). Note the bidirectional classical communication required for the $X$ and $Z$ Pauli corrections.}
    \label{fig:telegate_circuit}
\end{figure}

Compared to state teleportation, Telegate is highly resource-efficient for isolated non-local interactions, consuming only half the entanglement required for a Teledata ``round-trip'' (1 EPR pair versus 2). However, this efficiency comes at the expense of higher synchronization overhead. The protocol necessitates a bidirectional exchange of classical information (a signal to initiate the remote sequence and the return of measurement results to finalize the gate) which exposes the circuit to the latency of two distinct classical communication events.

\subsection{Approximate $\operatorname{QFT}^{-1}$ with phase fraction $\varepsilon$}\label{sec:background:error} 

The exact formulation of the $\operatorname{QFT}^{-1}$ entails a complete interaction graph, requiring a two-qubit controlled-phase rotation $CP(\theta_k)$ for every pair of qubits, where $k$ denotes the index difference between the control and target qubits ($\qtarg{i}-\qctrl{i}$). As detailed in Section~\ref{sec:iqft_background}, the rotation angle is given by:

\begin{equation}
    \theta_k = -\frac{\pi}{2^k}.
\end{equation}

A critical observation for optimization is the exponential decay of these rotation angles as $k$ increases. The magnitude of the phase shift is reduced by half for each increment. Consequently, the total phase accumulated by the last target qubit $n-1$, assuming all possible control qubits are in the state $|1\rangle$, follows a geometric progression:

\begin{equation}\label{eq:sum_angles}
    \varphi(n-1) = \sum_{k=1}^{n-1}\theta_{k} = \sum_{k=1}^{n-1} \frac{\pi}{2^k} = \pi \left( 1 - \frac{1}{2^{n-1}} \right).
\end{equation}

As $n \to \infty$, this sum converges to $\pi$, representing a maximum phase shift of $180^\circ$. However, the contributions from distant qubits (large $k$) become vanishingly small. This behavior is illustrated in Figure~\ref{fig:phase_semicircle}, which visualizes how the significance of the rotation diminishes rapidly with $k$.

\begin{figure*}[t]
    \centering
    \begin{subfigure}[b]{0.45\textwidth}
        \centering
        \resizebox{\linewidth}{!}{\begin{tikzpicture}[scale=1]
        \def\R{4.0}

        \draw[gray!20] (-\R-0.2,0) -- (\R+0.2,0); 
        \draw[thick] (\R,0) arc (0:180:\R);       

        \fill[cyan!5] (0,0) -- (0:\R) arc (0:90:\R) -- cycle;
        \draw[cyan!40] (0,0) -- (90:\R);
        \node at (45:0.6*\R) {\normalsize $k=1$};
        \node at (45:0.85*\R) {\normalsize $\theta = \frac{\pi}{2}$};

        \fill[cyan!15] (0,0) -- (90:\R) arc (90:135:\R) -- cycle;
        \draw[cyan!50] (0,0) -- (135:\R);
        \node at (112.5:0.6*\R) {\normalsize $k=2$};
        \node at (112.5:0.85*\R) {\normalsize $\frac{\pi}{4}$};

        \fill[cyan!25] (0,0) -- (135:\R) arc (135:157.5:\R) -- cycle;
        \draw[cyan!60] (0,0) -- (157.5:\R);
        \node[font=\small] at (146.25:0.7*\R) {\normalsize $k=3$};
        \node[font=\small] at (146.25:0.9*\R) {\normalsize $\frac{\pi}{8}$};

        \fill[cyan!35] (0,0) -- (157.5:\R) arc (157.5:168.75:\R) -- cycle;
        \draw[cyan!70] (0,0) -- (168.75:\R);
        \node[font=\tiny, anchor=center] at (163.125:0.7*\R) {\small $k=4$};
        \node[text=black, font=\tiny, anchor=center] at (163.125:0.9*\R) {\normalsize $\frac{\pi}{16}$};

        \fill[cyan!45] (0,0) -- (168.75:\R) arc (168.75:174.375:\R) -- cycle;
        \draw[cyan!80] (0,0) -- (174.375:\R);
        \node[font=\tiny, anchor=center] at (171.5:0.75*\R) {\footnotesize $k=5$};
        \node[text=black, font=\tiny, anchor=center] at (171.5:0.9*\R) {\footnotesize $\frac{\pi}{32}$};

        \fill[black!90] (0,0) -- (174.375:\R) arc (174.375:180:\R) -- cycle;
        \draw[thick] (0,0) -- (180:\R);

        \node[text=white, font=\tiny] at (178:0.8*\R) {$\dots$};
        
        \draw[->, thick, black] (0:0\R+0.3) arc (0:15:\R+0.3);
        \node[anchor=north east, text=black, font=\normalsize] at (10:0\R+0.85) {$\theta_{k}$};

        \node[anchor=north west, font=\normalsize] at (180:\R) {$\pi$};

        \fill (0,0) circle (1.5pt);
        \node[below left] at (0,0) {0};

        \draw[red, ultra thick, dashed] (0,0) -- (174.375:\R+0.1);
        \node[text=black, font=\normalsize] at (174.375:\R+0.3) {$t$};

    \end{tikzpicture}}
        \caption{Geometric representation of phase sectors.}
        \label{fig:phase_semicircle_a}
    \end{subfigure}
    \hfill
    \begin{subfigure}[b]{0.48\textwidth}
        \centering
        \scriptsize
\begin{tikzpicture}
    \begin{semilogyaxis}[
        xlabel={$k$-index},
        ylabel={Rotation angle $\theta_k$ (rad)},
        xmin=1, xmax=15,
        ymin=1e-5, ymax=2,
        grid=both,
        major grid style={dashed, gray!30},
        minor grid style={dotted, gray!20},
        width=0.9\linewidth,
        height=5.2cm,
        legend pos=south west,
        log basis y={10},
        ytickten={-4,-3,-2,-1,0}
    ]
    \addplot[domain=1:15, samples=15, mark=*, blue, thick] {pi * (0.5^x)};
    \addlegendentry{$\theta_k = \pi/2^k$}
    
    \draw[red, dashed, thick] (axis cs:7,1e-5) -- (axis cs:7,2) 
        node[pos=0.85, right] {Threshold $t$};
    \end{semilogyaxis}
\end{tikzpicture}
        \caption{Exponential decay of rotation angles.}
        \label{fig:phase_semicircle_b}
    \end{subfigure}
    
    \caption{Representation of the accumulated phase in the iQFT. \textbf{(a)} Geometric visualization showing the sectors for $k=1$ to $k=5$. The threshold cutoff $t$ indicates the truncation point; interactions beyond this limit contribute a negligible phase shift bounded by $\varepsilon\pi$. \textbf{(b)} Semilogarithmic plot quantifying the rapid exponential decay of the rotation angles $\theta_k = \pi/2^k$ as a function of the $k$ parameter, explicitly highlighting a generic truncation threshold $t =7$.}
    \label{fig:phase_semicircle}
\end{figure*}

To reduce the gate count, we can employ an approximate $\operatorname{QFT}^{-1}$ by truncating these negligible rotations. To bound the algorithmic error introduced by this truncation, we define a fractional phase tolerance $\varepsilon$, guaranteeing that the total accumulated phase of the discarded gates strictly does not exceed $\varepsilon\pi$. Based on this tolerance, we introduce a threshold parameter $t$, such that all $CP(\theta_k)$ gates with $k > t$ are discarded. The unitary operator for the approximate transform, denoted as $\tilde{U}$, differs from the exact operator $U$ by the omission of these small-angle gates.

To rigorously formalize the algorithmic error introduced by this truncation strategy, we encapsulate the analytical tracking of the discarded rotations into a formal phase-bound framework.

\begin{theorem}[Accumulated Phase Error Bound]
\label{thm:error_bound}
Let $\varphi(n-1) = \sum_{k=1}^{n-1} |\theta_k|$ be the absolute phase accumulated by the final target qubit within an $n$-qubit iQFT register (assuming a worst-case scenario where all controls are $|1\rangle$), and let $\varphi(t) = \sum_{k=1}^{t} |\theta_k|$ be the implemented accumulated phase under a pruning threshold $t$. The absolute phase deviation resulting from omitting all controlled-phase interactions where $k > t$ is strictly upper-bounded by:
\begin{equation}
    \varphi(n-1) - \varphi(t) \leq \frac{\pi}{2^t}
\end{equation}
\end{theorem}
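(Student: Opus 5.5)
The plan is to compute the omitted tail exactly as a finite geometric series and then bound it by the corresponding infinite series. Since $|\theta_k| = \pi/2^k$ by Eq.~\eqref{eq:theta_generic_iqft}, both accumulated phases are partial sums of the same nonnegative sequence.

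First, I will dispose of the degenerate case. If $t \geq n-1$, then no gate is pruned. Here I read $\varphi(t)$ as summing only over indices $k \le n-1$ that actually occur in the register, so the deviation is $0 \le \pi/2^t$. If one instead reads $\varphi(t)$ literally as a sum up to $t$, the difference is negative, and the bound holds trivially either way.

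For the main case $t < n-1$, I subtract the two partial sums. This leaves exactly the contributions of the discarded interactions:
\begin{equation*}
    \varphi(n-1) - \varphi(t) = \sum_{k=t+1}^{n-1} \frac{\pi}{2^k}.
\end{equation*}
Using the closed form from Eq.~\eqref{eq:sum_angles} for both $\varphi(n-1)$ and $\varphi(t)$, this equals
\begin{equation*}
    \pi\left(1-\frac{1}{2^{n-1}}\right) - \pi\left(1-\frac{1}{2^{t}}\right) = \frac{\pi}{2^t} - \frac{\pi}{2^{n-1}}.
\end{equation*}
Since $\pi/2^{n-1} > 0$, the deviation is bounded by $\pi/2^t$, and the inequality is in fact strict. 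Equivalently, one can bound the finite tail by the infinite tail $\sum_{k>t}\pi/2^k = \pi/2^t$, which shows the bound is independent of the register size $n$.

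There is no real obstacle in this argument. The only point needing care is interpreting the statement so that $\varphi(t)$ counts exactly the implemented gates acting on the final target, namely those with $k \le \min(t, n-1)$, and checking that the worst-case assumption (all controls in $|1\rangle$) makes the implemented and exact phases these partial sums. Finally, I will note the corollary used afterward: choosing $t \ge \log_2(1/\varepsilon)$ guarantees a deviation of at most $\varepsilon\pi$.
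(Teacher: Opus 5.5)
Your proposal is correct and follows the paper's proof essentially verbatim: you split off the tail $\sum_{k=t+1}^{n-1}\pi/2^k$, evaluate it as $\pi\left(2^{-t}-2^{-(n-1)}\right)$, and drop the subtracted term. Your treatment of the case $t\ge n-1$, and your use of the positivity of $\pi/2^{n-1}$ to get the bound for every finite $n$ (instead of the paper's appeal to the limit $n\to\infty$), are slightly more careful than the original.
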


\begin{proof}
By partitioning the exact accumulated phase summation at the threshold cutoff parameter $t$, the residual phase contribution of the truncated controlled-phase gates can be isolated as follows:
\begin{equation}
    \varphi(n-1) = \varphi(t) + \sum_{k=t+1}^{n-1}|\theta_k|.
\end{equation}
Consequently, the deviation between the exact and the implemented rotation is determined entirely by the sum of the truncated angles. By substituting the specific absolute rotation values $|\theta_k| = \pi/2^k$, we obtain:
\begin{equation}\label{eq:error}
    \varphi(n-1) - \varphi(t) = \sum_{k=t+1}^{n-1}\frac{\pi}{2^k} = \pi  \sum_{k=t+1}^{n-1} 2^{-k} .
\end{equation}
Using the geometric series property $\sum_{k=t+1}^{n-1} 2^{-k} = 2^{-t} - 2^{-(n-1)}$, the analytical expression for the approximation error evaluates exactly to:
\begin{equation}
    \varphi(n-1) - \varphi(t) = \pi\left(\frac{1}{2^t} - \frac{1}{2^{n-1}}\right).
\end{equation}
In the asymptotic limit as the global register scales ($n \to \infty$), the negative component $1/2^{n-1}$ vanishes. This establishes the strict, network-independent upper bound:
\begin{equation}
    \varphi(n-1) - \varphi(t) \leq \frac{\pi}{2^t}.
\end{equation}
\end{proof}

To ensure that the approximation does not exceed the specified tolerance $\varepsilon\pi$, we require the bounded deviation to satisfy $\varphi(n-1) - \varphi(t) \leq \varepsilon\pi$. Solving for the threshold $t$, we obtain the cutoff condition:
\begin{equation}
    2^t \geq \frac{\pi}{\varepsilon\pi} \implies t \geq \log_2 \left( \frac{1}{\varepsilon} \right) = -\log_2(\varepsilon).
\end{equation}

Thus, to maintain an accuracy of $\varepsilon$, it is sufficient to retain only those controlled-phase gates where 
\begin{equation}\label{eq:threshold_definition}
k \leq t = \lceil -\log_2(\varepsilon) \rceil. 
\end{equation}

This transforms the gate complexity from $\mathcal{O}(n^2)$ to $\mathcal{O}(n\cdot t)$, offering a significant reduction for large $n$ while preserving algorithmic fidelity.

\section{Methods and methodology}\label{sec:methods}
\subsection{Distributed iQFT across $P$ nodes}\label{sec:methods:distributed_iqft}

\begin{figure*}
    \centering
    \resizebox{\linewidth}{!}{\input{circuits/diqft}}
    
    \caption{Distributed decomposition of a 6-qubit iQFT across 3 nodes ($p=0,1,2$). The circuit is restructured into Local iQFT blocks (light gray), which require no external communication, and Communication Blocks (darker gray), where Node $p$ interacts with qubits from previous nodes $p' < p$. The phase angles follow $\theta_k = \pi/2^k$.}
    \label{fig:dist_iqft_3nodes}
\end{figure*}

To distribute the \gls{IQFT} across a networked architecture, we must restructure the monolithic circuit into a sequence of operations that distinguishes between intra-node (local) and inter-node (distributed) dependencies.


To formalize the distributed architecture, we define a quantum circuit characterized by the following parameters:

\begin{itemize}
    \item \textbf{Network dimension:} The system consists of $P$ nodes, indexed by the identifier $p \in \{0, 1, \dots, P-1\}$. To explicitly distinguish functional roles during distributed quantum operations, we introduce the modifiers $\qctrl{p}$ to denote a node acting as a control and $\qtarg{p}$ to denote a target node.
    
    \item \textbf{Node capacity:} We assume a homogeneous distribution where each node hosts exactly $Q$ local qubits. Consequently, the total global register size is $n = P \cdot Q$.
    
    \item \textbf{Qubit state notation:} We introduce the notation $|\tilde{x}_q^{(p)}\rangle$ to denote the state of the $q$-th qubit residing in the $p$-th node. Here, $q \in \{0, 1, \dots, Q-1\}$ represents the local index of the qubit within its host node. Similar to the node notation, we use $\qctrl{q}$ and $\qtarg{q}$ to designate a control and target qubit, respectively. To simplify complex mathematical expressions and improve readability, we define the shorthand notation $\qctrl{|\tilde{x}_q^{(p)}\rangle}$ as strictly equivalent to $|\tilde{x}_{\qctrl{q}}^{(\qctrl{p})}\rangle$, concisely indicating that both the specific qubit and its host node serve the control role. The same equivalence applies to the target notation $\qtarg{|\tilde{x}_q^{(p)}\rangle} \equiv |\tilde{x}_{\qtarg{q}}^{(\qtarg{p})}\rangle$.
    
    \item \textbf{Global mapping:} The global index of any given qubit within the complete $n$-qubit register can be recovered through the linear mapping function $I(q,p) = p \cdot Q + q$.
\end{itemize}

The standard iQFT decomposition (as seen in Section~\ref{sec:iqft_background}) requires applying Controlled-Phase ($CP$) rotations between every pair of qubits. In a distributed setting, these interactions fall into two distinct categories:
\begin{enumerate}
    \item \textbf{Local blocks:} Interactions where both the control and target qubits reside within the same node $p$ (so, $\qtarg{p}=\qctrl{p}$). These operations can be executed entirely within the QPU without utilizing the quantum network.
    \item \textbf{Communication blocks:} Interactions where the control qubit resides in node $\qctrl{p}$ and the target qubit resides in node $\qtarg{p}$ (with $\qtarg{p}\neq\qctrl{p}$). These require entanglement-assisted communication primitives to realize the non-local phase shifts.
\end{enumerate}

Figure~\ref{fig:dist_iqft_3nodes} illustrates this segmented architecture for a specific case of $P = 3$ nodes with $Q = 2$ qubits each ($n = 6$). The structure reveals a distinct execution pattern: each node $p$ must first apply a series of \textit{Communication blocks}, interacting sequentially with nodes $\qctrl{p} < p$. Only after these non-local phases are applied can the node proceed to execute its \textit{Local block}. Crucially, this local component can be identified as a lower-order iQFT instance; specifically, it corresponds to a local iQFT acting on the $Q$ qubits assigned to the node, executed without external dependencies.

The generalized interaction logic is depicted in Figure~\ref{fig:generic_interaction}. For a control qubit $\qctrl{|\tilde{x}_i^{(p)}\rangle}$ in node $\qctrl{p}$ and a target qubit $\qtarg{|\tilde{x}_i^{(p)}\rangle}$  in a remote node $\qtarg{p}$, the rotation angle depends on the global index difference between them. Assuming the node index $\qtarg{p}$ represents the higher-order significance bits, the index difference $k$ between the control qubit $\qctrl{q}$ of node $\qctrl{p}$ and the target qubit $\qtarg{q}$ of node $\qtarg{p}$ is derived from their global indices:
\begin{equation}
    \qctrl{I} = \qctrl{p} \cdot Q + \qctrl{q}, \quad \qtarg{I} = \qtarg{p} \cdot Q + \qtarg{j}.
\end{equation}

The rotation index $k$ corresponds to the difference $|\qtarg{I} - \qctrl{I}|$. In the standard decomposition where controls are applied from distant nodes $p < p'$, the specific rotation angle appliying Eq.~\ref{eq:theta_generic_iqft} is given by:

\begin{equation}\label{eq:theta_generic_diqft}
    \theta(\qctrl{I}, \qtarg{I}) = \frac{-\pi}{2^{Q(\qtarg{p}-\qctrl{p}) + (\qtarg{q}-\qctrl{q})}}.
\end{equation}

This formulation allows us to map the abstract algorithm directly to physical link requirements. The communication blocks effectively group these interactions.

\begin{figure}
    \centering
    \input{circuits/distributed-phase-generic}
    \caption{Generic interaction between a target qubit $\qtarg{q}$ in node $\qtarg{p}$ and a control qubit $\qctrl{q}$ in a distant node $\qtarg{p}$. The rotation angle is determined by the global index difference, formalized as $\theta_k$ where $k = \frac{-\pi}{2^{k}}$.}
    \label{fig:generic_interaction}
\end{figure}

To formally demonstrate that the intra-node operations constitute a standard iQFT of dimension $Q$ (local block), we apply the locality condition $\qctrl{p} = \qtarg{p}$ to the generalized rotation formula derived in Eq.~\ref{eq:theta_generic_diqft}. When both control and target qubits reside in the same node, the term $Q(\qtarg{p}-\qctrl{p}) = 0$. Consequently, the rotation angle $\theta$ simplifies to depend solely on the local indices $\qctrl{q}$ and $\qtarg{q}$:

\begin{equation}
    \theta_{\qtarg{p}=\qctrl{p}} = \frac{-\pi}{2^{Q(0) + (\qtarg{q}-\qctrl{q})}} = \frac{-\pi}{2^{\qtarg{q}-\qctrl{q}}}, \quad \text{for } \qtarg{q} > \qctrl{q}.
\end{equation}

This expression is mathematically identical to the phase shift definition of the monolithic iQFT shown in Figure~\ref{fig:generic_interaction} and Eq.~\ref{eq:theta_generic_iqft}, specifically for a difference $k = \qtarg{i}-\qctrl{i}$. Since the indices $i$ range from $0$ to $Q-1$, the set of operations within this block generates exactly the interaction graph of a $Q$-qubit iQFT. Thus, the \textit{Local Block} is equivalent to an iQFT($Q$), verifying that no inter-node communication is required for these specific gates.

Figure~\ref{fig:diqft_restructured} depicts the generalized execution schedule for an arbitrary node $p$ within the network. The quantum circuit is abstracted into a sequence of communicative layers followed by a local computation (as shown in Figure~\ref{fig:dist_iqft_3nodes}).

Instead of treating non-local interactions as isolated gates, the interaction with preceding nodes is structured into communicative blocks denoted by $\mathbf{CP}(\Theta_d)$. Here, $d$ represents the logical distance to the control node (i.e., interacting with node $\qtarg{p} = \qctrl{p}+d$). Each block encapsulates the complete set of remote controlled-phase rotations required between the $Q$ qubits of the target node $\qtarg{p}$ and the $Q$ qubits of the control node $\qtarg{p}-d = \qctrl{p}$.

Mathematically, $\Theta_d$ is defined as the set of all phase rotations to be applied within that specific block. Following the generalized rotation derived in Eq.~\ref{eq:theta_generic_diqft}, this set is strictly formulated as:
\begin{equation}\label{eq:communication_block}
    \Theta_d = \left\{ \theta\left(I\left(\qctrl{q},\qctrl{p}\right),I\left(\qtarg{q},\qtarg{p}\right)\right) \mid \qtarg{p}=\qctrl{p}+d\right\},
\end{equation}
where $\qtarg{q}$ denotes the qubit index in the target register (node $\qtarg{p}$) and $\qctrl{q}$ the index in the control register (node $\qtarg{p}-d=\qctrl{p}$). This formulation ensures that all cross-node dependencies are satisfied for the given distance.

As shown in the circuit, the sequence proceeds from the most distant interactions ($d=p$, corresponding to node 0) to the nearest neighbor ($d=1$, corresponding to node $p-1$).\\

\subsubsection{Entanglement Resource Allocation Profile} 

Entanglement cost is strictly tied to the inter-node migration of the quantum state, rather than the number of remote gates. When a control qubit $|\tilde{x}_{q}^{(p)}\rangle^\bullet$ must interact with multiple targets in a remote node $p^\Box$, the Telegate protocol consumes exactly one EPR pair to create a cat-entangled proxy at the destination. Once this proxy is established in the remote node, it can be applied to an arbitrary number of target qubits with zero additional entanglement cost.

\begin{figure}
    \centering
    \input{circuits/diqft_restructured}
    \caption{Generic interaction between nodes, execution model for an arbitrary node $p$. The circuit illustrates the sequential application of communication blocks $\mathbf{CP}$, starting from the most distant dependency (Node 0, $d=p$) up to the nearest (Node $p-1$, $d=1$). The execution of this sequence assumes that all previous nodes $0, \dots, p-1$ have already initiated their corresponding operations to act as controls. The process concludes with a local inverse QFT on the $Q$ qubits of node $p$.}    
    \label{fig:diqft_restructured}
\end{figure}

\subsection{Communication Efficiency: block pruning strategy}\label{sec:methods:communication_efficiency}

While the standard approximate iQFT reduces the total gate count by discarding small-angle rotations, its impact is particularly transformative in a distributed setting. In monolithic architectures, removing a gate merely saves execution time. However, in a distributed architecture, the primary bottlenecks are the generation of entanglement (EPR pairs) and the latency of classical communication rounds. Therefore, the optimization goal shifts from simply reducing $N_{CP}$ to minimizing the number of active inter-node links.

We leverage the geometric decay of the rotation angles derived in Section~\ref{sec:iqft_background} to define a \textit{communication horizon}. Recall that for a target accuracy $\varepsilon$, any controlled-phase rotation $CP(\theta_k)$ with distance $k > t$ can be approximated as the identity, where $t = \lceil -\log_2(\varepsilon) \rceil$ (Eq.~\ref{eq:threshold_definition}). We apply this threshold to the coarse-grained communication blocks $\mathbf{CP}(\Theta_d)$ (Eq.~\ref{eq:communication_block}) defined in Section~\ref{sec:methods:distributed_iqft}.

Based on this truncation condition, we can formalize the maximum interaction range required in the distributed architecture.

\begin{proposition}[Communication Horizon]
\label{prop:comm_horizon}
For a distributed quantum register of node capacity $Q$ and a target algorithmic tolerance $\varepsilon$, the maximum required interaction distance between nodes, denoted as the communication horizon $d_\text{max}$, is strictly bounded by:
\begin{equation}\label{eq:communication_horizon}
    d_\text{max} = \left\lfloor \frac{\lceil -\log_2(\varepsilon) \rceil - 1}{Q} \right\rfloor + 1.
\end{equation}
\end{proposition}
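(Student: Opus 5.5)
The proof is a direct combinatorial reading of the truncation rule in Eq.~\ref{eq:threshold_definition} on the block structure of Eq.~\ref{eq:communication_block}. A communication block $\mathbf{CP}(\Theta_d)$ is needed, and hence an inter-node link is active, exactly when at least one rotation in $\Theta_d$ survives the pruning. By Eq.~\ref{eq:threshold_definition}, a rotation survives iff its global index difference satisfies $k \le t$ with $t=\lceil -\log_2(\varepsilon)\rceil$. So $d_\text{max}$ is the largest distance $d$ for which $\Theta_d$ contains some admissible $k\le t$.

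\textbf{Step 1: the range of $k$ within a block.} Using Eq.~\ref{eq:theta_generic_diqft}, a block at node distance $d=\qtarg{p}-\qctrl{p}\ge 1$ has
\[
k = Qd + (\qtarg{q}-\qctrl{q}), \qquad \qtarg{q},\qctrl{q}\in\{0,\dots,Q-1\}.
\]
The local offset $\qtarg{q}-\qctrl{q}$ ranges over $\{-(Q-1),\dots,Q-1\}$. Hence the smallest index difference in $\Theta_d$ is $k_{\min}(d)=Qd-(Q-1)=Q(d-1)+1$, attained by $\qctrl{q}=Q-1$ and $\qtarg{q}=0$. This value is always positive, so the rotation is a genuine iQFT gate with the target more significant than the control.

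\textbf{Step 2: the activity criterion.} Block $d$ is active iff $k_{\min}(d)\le t$, that is, iff $Q(d-1)\le t-1$, or equivalently $d-1\le (t-1)/Q$. Since $d-1$ is an integer, this is the same as
\[
d\le \left\lfloor \frac{t-1}{Q}\right\rfloor+1 .
\]
Because $k_{\min}(d)$ is strictly increasing in $d$, the set of active distances is the initial segment $\{1,\dots,d_\text{max}\}$. Substituting $t=\lceil-\log_2\varepsilon\rceil$ gives Eq.~\ref{eq:communication_horizon}.

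\textbf{Step 3: error control.} Every block with $d>d_\text{max}$ has all of its $k\ge k_{\min}(d)>t$, so dropping it entirely is a sub-case of the gate-level pruning. Theorem~\ref{thm:error_bound} then bounds the accumulated phase deviation by $\pi/2^t\le\varepsilon\pi$.

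\textbf{Main obstacle.} The delicate part is Step 1. Eq.~\ref{eq:theta_generic_diqft} is phrased with an absolute difference, so I must fix the orientation in which the negative local offset $\qtarg{q}-\qctrl{q}=-(Q-1)$ is allowed, and check that this corner pair really occurs in the iQFT. It does: the global control index is lower, so that pair is a legitimate gate. I must also handle the edge case $t=0$ (i.e.\ $\varepsilon\ge1$). There the formula gives $\lfloor -1/Q\rfloor+1=0$, meaning no blocks are retained, which is consistent with Step 2.
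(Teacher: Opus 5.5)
Your proposal is correct and takes essentially the same route as the paper: identify the strongest rotation in block $d$ as the corner pair $\qctrl{q}=Q-1$, $\qtarg{q}=0$ with $k_{\min}(d)=Q(d-1)+1$, impose the retention condition $k_{\min}(d)\le t$, and use the integrality of $d$ to obtain the floor. Your additions make explicit what the paper leaves implicit: monotonicity of $k_{\min}$ in $d$, the error check via Theorem~\ref{thm:error_bound}, and the degenerate-tolerance case (where, strictly, $t=0$ only for $1\le\varepsilon<2$).
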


\begin{proof}
A communication block $\mathbf{CP}(\Theta_d)$ encapsulates interactions between a target node $\qtarg{p}$ and a control node at distance $d$. To determine the relevance of the entire block, we consider the \textit{maximum interaction strength} within it, which corresponds to the smallest qubit distance $k_{min}$ (i.e., the largest rotation angle). This occurs between the least significant qubit of the target node ($\qtarg{q}=0$) and the most significant qubit of the control node ($\qctrl{q}=Q-1$). Applying Eq.~\ref{eq:theta_generic_diqft}:
\begin{equation}
    k_{min}(d) = Q \cdot (\qtarg{p}-\qctrl{p}) + (0) - (Q-1) = Q(d - 1) + 1.
\end{equation}
If this strongest interaction falls below the significance threshold (i.e., if $k_{min}(d) > t$), then every other interaction in the block will also satisfy $k > t$. Consequently, the entire communication block becomes negligible and can be pruned from the schedule. Setting the retention condition $k_{min}(d_\text{max}) \le t$, we solve for $d_\text{max}$:
\begin{equation}
    Q(d_\text{max} - 1) + 1 \le t \implies d_\text{max} \le \frac{t - 1}{Q} + 1.
\end{equation}
Substituting $t$ from Theorem \ref{thm:error_bound} (Eq.~\ref{eq:threshold_definition}) and applying the floor function for integer node distances completes the proof.
\end{proof}

This structural limitation directly alters the asymptotic scaling of the algorithm's quantum network footprint.

\begin{corollary}[Linear Global Entanglement Scaling]
\label{cor:scaling}
In the communication efficient distributed iQFT, the global entanglement complexity scales linearly $\mathcal{O}(P)$ with respect to the total number of nodes $P$, rather than the quadratic $\mathcal{O}(P^2)$ scaling of the exact distributed implementation.
\end{corollary}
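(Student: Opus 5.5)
The plan is to count EPR pairs node by node, using the entanglement allocation profile above: each control qubit that must act on a remote node costs exactly one EPR pair, and the resulting cat-entangled proxy can then serve any number of targets in that node at no further cost. With this accounting, a communication block $\mathbf{CP}(\Theta_d)$ between a control node $\qctrl{p}$ and a target node $\qtarg{p}=\qctrl{p}+d$ consumes at most $Q$ EPR pairs, one per control qubit of $\qctrl{p}$, no matter how many rotations it contains. The whole argument reduces to counting how many communication blocks survive pruning and multiplying by this per-block cost.

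The first step handles the exact distributed iQFT as a baseline. In the schedule of Figure~\ref{fig:diqft_restructured}, node $p$ executes blocks for $d=1,\dots,p$, so the total number of blocks is $\sum_{p=0}^{P-1} p = P(P-1)/2$. Each block requires at least one EPR pair, since its strongest rotation is nontrivial, and at most $Q$ pairs. The global cost is therefore $\Theta(P^2)$ for fixed $Q$, at most $Q\,P(P-1)/2$. This matches the $\Omega(P^2)$ lower bound of~\cite{Ebnenasir2025LowerBounds} quoted in the introduction.

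The second step applies Proposition~\ref{prop:comm_horizon}. Under pruning, node $p$ keeps only the blocks with $1\le d\le \min(p,d_\text{max})$. Its entanglement consumption is therefore at most $Q\cdot d_\text{max}$. Since $d_\text{max}$ depends only on $Q$ and $\varepsilon$, and not on $P$, this per-node cost saturates to a constant once $p\ge d_\text{max}$. Summing over nodes gives
\begin{equation*}
E(P)\le\sum_{p=0}^{P-1} Q\min(p,d_\text{max})\le Q\,d_\text{max}\,P,
\end{equation*}
which is $\mathcal{O}(P)$ for fixed $Q$ and $\varepsilon$. For a sharper statement, I would note that $Q\,d_\text{max}\le t-1+Q$, so $E(P)=\mathcal{O}(P(t+Q))$ even when $Q$ varies.

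The main obstacle is making the per-block cost bound rigorous. Two points need care. First, a proxy created for a control qubit has to stay valid throughout every interaction it is used for, which means the target-side operations within a block must not be interleaved with a Cat-DisEnt on that control. I would handle this by fixing the schedule so that each block is performed contiguously, with one Cat-Ent and one Cat-DisEnt per control qubit. Second, I would check that pruning the fine-grained gates with $k>t$ inside a retained block can only lower the cost, so that $Q$ per block remains an upper bound. I would also add a remark that the non-trivial regime is $P>d_\text{max}$; for smaller $P$ the two scalings coincide.
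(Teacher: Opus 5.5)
Your proposal is correct and follows the paper's route: Proposition~\ref{prop:comm_horizon} caps the number of retained communication blocks per node at $d_\text{max}$, this cap is independent of $P$, and summing over the $P$ nodes gives $\mathcal{O}(P)$, compared with the $P(P-1)/2$ blocks of the exact scheme. Your extra step of converting blocks to EPR pairs (at most $Q$ per block under the Telegate proxy accounting, so $E(P)\le Q\,d_\text{max}\,P$) makes explicit what the paper leaves implicit when it counts ``communication links.''
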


\begin{proof}
In the exact distributed iQFT architecture, each node must establish entanglement links with all its preceding nodes, yielding an all-to-all communication topology where the total number of inter-node blocks grows as $\sum_{i=1}^{P-1} i \in \mathcal{O}(P^2)$. By Proposition \ref{prop:comm_horizon}, the pruning strategy bounds the connectivity degree of any node $p$ to a maximum of $d_\text{max}$ nearest neighbors. Since $d_\text{max}$ depends solely on the localized parameters $Q$ and $\varepsilon$ and is independent of $P$, the maximum number of required communication links per node is bounded by a constant $\mathcal{O}(1)$. Summing this over all $P$ nodes reduces the aggregate global communication complexity to $\mathcal{O}(P)$.
\end{proof}

\subsubsection{Granularity of the Truncation Strategy}

Pruning operates strictly at the block level. If a block is retained ($k_{min}(d) \le t$), all its internal gates are executed—even those where $k > t$. Since the EPR pair required to migrate the control qubit is already consumed by the block's significant interactions, executing the remaining weak rotations incurs zero additional communication cost. Retaining them maximizes algorithmic fidelity without any entanglement overhead.

To illustrate the practical impact, consider a network composed of nodes with capacity $Q=4$ qubits. Targeting a high algorithmic fidelity with $\varepsilon = 10^{-5}$, the phase truncation threshold is $t \approx 17$. Applying Eq.~\ref{eq:communication_horizon}, the communication horizon is bounded by $d_\text{max} = 5$. This implies that in a large-scale network (e.g., 100 nodes), Node 99 is not required to establish costly long-range entanglement links with Nodes 0 through 93; it strictly needs to coordinate with its 5 nearest predecessors (Nodes 94--98). Interactions beyond this horizon contribute negligible phase shifts and are safely discarded, drastically reducing the demand on quantum repeater networks.

\subsection{Metrics evaluated}
To rigorously assess the impact of the proposed distributed architecture and optimization strategies, we quantify performance using the following set of metrics. These metrics are designed to evaluate the trade-offs between algorithmic fidelity, communication resources, and circuit locality.

\subsubsection{Fidelity vs. Threshold}
We evaluate the accuracy of the approximate distributed iQFT by computing the state fidelity, defined as:

\begin{equation}\label{eq:fidelity}
    \mathcal{F} = |\langle \psi_{ideal} | \psi_{approx} \rangle|^2, 
\end{equation}
\noindent where $|\psi_{ideal}\rangle$ is the output state of the exact transform and $|\psi_{approx}\rangle$ is the output of the pruned circuit. This metric is analyzed as a function of the pruning threshold $t$. The objective is to validate the theoretical error model derived in Section~\ref{sec:background:error} by comparing the fidelity obtained from full state-vector simulations against the theoretical lower bound implied by the chosen accuracy parameter $\varepsilon$. \\

\noindent\textbf{What does it prove?} This serves to empirically validate the theoretical error bound and cutoff condition derived in Section~\ref{sec:background:error} (Eq.~\ref{eq:threshold_definition}).

\subsubsection{Expected State Probability vs. Threshold}

While state-vector simulations provide exact fidelity calculations under ideal conditions, assessing the architecture's viability in the NISQ era requires modeling stochastic hardware errors. To facilitate the introduction of realistic physical noise profiles, such as those provided by IBM Qiskit Runtime FakeBackends, we utilize a measurement-based error metric compatible with shot-based simulators (which output discrete measurement counts rather than exact state amplitudes).

The evaluation protocol involves initializing the system in a known Fourier basis state $|\tilde{x}\rangle$, which corresponds to the ideal QFT transformation of a computational basis state $|x\rangle$. Upon applying the approximate distributed iQFT, we measure the output in the computational basis. In an ideal execution, the probability mass should concentrate entirely on the state $|x\rangle$. We define the empirical error $E_\text{counts}$ as the fraction of measurement shots that fail to collapse into the expected computational state:
\begin{equation}
    E_\text{counts} = 1 - \frac{N_{x}}{N_\text{total}},
\end{equation}
where $N_{x}$ is the count of shots matching the expected bitstring $x$, and $N_\text{total}$ is the total number of shots executed. \\

\noindent\textbf{What does it prove?} This metric allows us to evaluate how the execution of the circuit is affected by applying different pruning thresholds in noisy systems, providing a practical method to assess algorithmic resilience in large-scale distributed architectures where direct fidelity computation is intractable.

\subsubsection{Scalability of entanglement resources (EPR per Node)}
This metric quantifies the aggregate cost of inter-node communication required to execute the algorithm. We measure the total count of EPR pairs consumed across the entire network during the circuit execution. We analyze the behavior of this cost as a function of the pruning threshold $t$, comparing the baseline distributed iQFT against the optimized implementation. \\

\noindent\textbf{What does it prove?} This metric directly demonstrates the global communication savings achieved by the pruning strategy described in Section~\ref{sec:methods:communication_efficiency}.

\subsubsection{Accuracy under connectivity constraints}
Suppose hardware limitation of a fixed communication horizon $d_\text{max}$ (the maximum distance between interacting nodes) and a specific node capacity $Q$, we analyze the achievable algorithmic accuracy. By inverting the relationship established in the block pruning strategy, we determine the minimum error $\varepsilon$ that can be guaranteed when the communication topology is restricted. \\

\noindent\textbf{What does it prove?} This analysis confirms the feasibility of the bounded interaction model and the communication horizon ($d_\text{max}$) proposed in Section~\ref{sec:methods:communication_efficiency}.

\subsubsection{Coupling Ratio}
We define the Coupling Ratio $\eta$ as the quotient between the number of remote controlled-phase gates (inter-node) and local controlled-phase gates (intra-node). This metric serves as a proxy for the degree of independence of the nodes. We analyze how $\eta$ evolves with respect to the node capacity $Q$, the total network size $P$, and the approximation tolerance $\varepsilon$. \\

\begin{equation}\label{eq:eta_locality_ratio}
    \eta(Q, P, \varepsilon) = \frac{N_{CP}^\text{remote}}{N_{CP}^\text{local}}
\end{equation}

The objective is to try to achieve a $\eta\rightarrow0$ in the distributed algorithm. \\

\noindent\textbf{What does it prove?} This demonstrates the effectiveness of the distributed decomposition in Section~\ref{sec:methods:distributed_iqft} in clustering operations locally to minimize network latency.

\subsubsection{Communication overhead}
To quantify the operational penalty introduced by the network protocols, we define the \textit{Communication Overhead} ($\gamma$). This metric is calculated as the ratio between the number of physical operations dedicated exclusively to quantum communication ($N_{\text{comm}}$) and the number of logical computational gates ($N_{\text{comp}}$):

\begin{equation}
    \gamma = \frac{N_{\text{comm}}}{N_{\text{comp}}}
\end{equation}

In the context of the \textit{Telegate} protocol utilized in this work, $N_{\text{comm}}$ encompasses all auxiliary operations necessary for state transfer, including Bell pair generation, entanglement swapping operations (CNOTs), and measurement-dependent corrections. Conversely, $N_{\text{comp}}$ refers strictly to the logical controlled-phase rotations required by the iQFT algorithm. A visual decomposition distinguishing between these communication-specific overheads and the fundamental computational gates is presented in Figure~\ref{fig:telegate_circuit}. It is important to emphasize that $\gamma$ is a global metric, representing the aggregate overhead of the distributed execution rather than a per-node average. \\

\noindent\textbf{What does it prove?} This metric exposes the communication cost of the distributed implementation, quantifying the exact ratio of physical overhead operations required to execute a single useful logical gate.

\subsection{Experimental methodology}

To validate the proposed distributed architecture and the efficiency of the pruning strategy, we implemented the distributed iQFT algorithm utilizing the Qiskit software development kit. The numerical simulations were performed using the statevector simulator backend provided by Qiskit Aer. This simulator performs an ideal execution of the quantum circuit, computing the exact final quantum state vector $|\psi_{final}\rangle$ rather than a probability distribution of measurement outcomes.

The choice of a state-vector simulation is essential for this study for two primary reasons: first, it allows for the precise calculation of the algorithmic fidelity $\mathcal{F}$ by computing the overlap with the ideal mathematical output, enabling a direct verification of the theoretical error bounds and threshold equations derived in Section~\ref{sec:background:error}. Second, it permits the exact tracking of entanglement resources without the statistical noise associated with shot-based sampling.

All experiments were conducted on the High-Performance Computing cluster of the \textit{Centro Singular de Investigación en Tecnoloxías Intelixentes} (CiTIUS)\footnote{More information about the HPC cluster: \url{https://wiki.citius.gal/centro:servizos:hpc}}. Specifically, the simulations were executed on a dedicated compute node designed for memory-intensive workloads, equipped with the following specifications:
\begin{itemize}
    \item \textbf{Processors:} 4 $\times$ Intel Xeon Gold 6248 CPUs @ 2.50GHz (20 cores per socket, totaling 80 physical cores).
    \item \textbf{Memory:} 1 TB of RAM.
\end{itemize}
This high-memory configuration is critical for the study, as the memory requirements for state-vector simulations scale exponentially with the number of qubits ($16 \times 2^n$ bytes). The available 1 TB of RAM enabled the simulation of distributed systems with sufficient qubits to analyze the asymptotic behavior of the communication horizon $d_\text{max}$ and the locality ratio $\eta$.
\section{Results and Discussion}\label{sec:results_discussion}

The results presented in this work demonstrate that the communication aware optimization of the distributed iQFT significantly improves scalability without compromising algorithmic integrity. By restructuring the circuit to exploit the quadratic decay of controlled-phase rotation angles, we effectively transform a global, all-to-all interaction problem into a localized execution model. In this section, we analyze the performance of the proposed architecture, evaluating the trade-offs between algorithmic fidelity, communication resources, and circuit locality.

\subsection{Fidelity and Theoretical Bounds}

\begin{figure}[h]
    \centering
    \scriptsize
\begin{tikzpicture}
    \begin{axis}[
        width=8cm, height=7cm,
        xlabel={Threshold $t$},
        ylabel={Infidelity ($1 - \mathcal{F}$)},
        ymode=log, 
        grid=both,
        grid style={line width=.1pt, draw=gray!20},
        major grid style={line width=.2pt,draw=gray!50},
        xmin=2, xmax=16, 
        xtick={2,4,6,8,10,12,14,16},
        legend pos=south west,
        legend style={font=\footnotesize},
    ]
        \addplot[
            color=blue,
            mark=*,
            mark size=2pt,
            thick
        ]
        table [
            x=limit,       
            y=qubits_18,   
            col sep=comma  
        ] {tikz/results/csv/errors.csv};
        \addlegendentry{Simulated ($Q=18$)}
        \addplot[
            color=red,
            mark=square*,
            mark size=2pt,
            style=dashed,
            thick
        ]
        table [
            x=limit, 
            y=qubits_18, 
            col sep=comma
        ] {tikz/results/csv/theoretical_errors.csv};
        \addlegendentry{Theoretical}
    \end{axis}
\end{tikzpicture}
    \caption{Analysis of the algorithmic infidelity ($1 - \mathcal{F}$) as a function of the pruning threshold $t$ for local register size $Q=18$. The data is plotted on a semilogarithmic scale to visualize the \textbf{exponential decay} of the error. Solid lines represent the results obtained from full state-vector simulations, while dashed lines correspond to the theoretical error bounds derived in Eq.~\ref{eq:threshold_definition}.}   
    \label{fig:results:fidelity}
\end{figure}

Figure~\ref{fig:results:fidelity} illustrates the infidelity, defined as $1 - \mathcal{F}$ (Eq.~\ref{eq:fidelity}), as a function of the pruning threshold $t$ (Eq.~\ref{eq:threshold_definition}) for a system size of $Q=18$ qubits. The data is presented on a semilogarithmic scale to accommodate the wide dynamic range of the error, which spans from approximately $10^{-1}$ to $10^{-10}$.

The linearity of the curves indicates that the infidelity decays exponentially with a linear increase in the threshold $t$. This behavior confirms the efficacy of the proposed pruning strategy: a moderate increase in the threshold parameter yields a reduction in the final state error by several orders of magnitude. For instance, in the $Q=18$ case, increasing the threshold from $t=4$ to $t=10$ suppresses the error by nearly four orders of magnitude.

A consistent divergence is observed between the numerical simulations and the analytical predictions, where the simulated infidelity (solid lines) remains strictly bounded by the theoretical error estimates (dashed lines). This discrepancy is attributable to the conservative nature of the theoretical bound, which is derived under a worst-case scenario assumption. The theoretical upper model assumes that every pruned controlled-phase gate, $CP(\theta)$, introduces its maximal possible error (physically corresponding to the control and target qubits invariably being in the state $|1\rangle$). However, in a practical execution environment, the quantum register exists in a superposition of computational basis states. For any subspace of the wavefunction where the control qubit is $|0\rangle$, the $CP(\theta)$ gate acts as the identity operator ($\mathbb{I}$), and its truncation introduces zero error. Consequently, the actual accumulated phase error is statistically lower than the sum of the absolute magnitudes of the discarded angles, offering a safety margin where the realized algorithmic fidelity exceeds the strict design requirements.

\subsection{Expected State Probability under Fake Backends}
\label{sec:results:expected_state_probability}

To evaluate the resilience of the proposed architecture against physical decoherence, we analyzed the Expected State Probability ($E_\text{counts}$) across different truncation thresholds $t$. The algorithm was executed on an ideal state-vector simulator alongside three realistic noisy hardware models provided by IBM Qiskit Runtime (Fake Hanoi, Fake Montreal, and Fake Paris). The results are depicted in Figure~\ref{fig:expected_probability}.

\begin{figure}
    \centering
    \scriptsize
\begin{tikzpicture}
\begin{axis}[
    xlabel={Threshold \textit{t}},
    ylabel={Expected State Probability},
    xmin=2, xmax=16,
    ymin=0, ymax=1.05,
    xtick={2,4,6,8,10,12,14,16}, 
    ymajorgrids=true,
    grid=both,
    grid style={line width=.1pt, draw=gray!20},
    major grid style={line width=.2pt,draw=gray!50},
    legend style={
        at={(0.55, 0.55)},
        anchor=west,
        font=\footnotesize
    },
    width=8cm, height=7cm,
    mark size=2pt
]

\addplot+ [color=blue, thick, mark=*] table [x=Limit, y=fake_hanoi, col sep=comma] {tikz/results/csv/datos_probabilidad.csv};
\addlegendentry{Fake Hanoi}
\addplot [color=blue, dashed, thick, domain=2:16, samples=2, forget plot] {0.1796875};

\addplot+ [color=red, thick, mark=square*] table [x=Limit, y=fake_montreal, col sep=comma] {tikz/results/csv/datos_probabilidad.csv};
\addlegendentry{Fake Montreal}
\addplot [color=red, dashed, thick, domain=2:16, samples=2, forget plot] {0.1298828125};

\addplot+ [color=brown, thick, mark=triangle*] table [x=Limit, y=fake_paris, col sep=comma] {tikz/results/csv/datos_probabilidad.csv};
\addlegendentry{Fake Paris}
\addplot [color=brown, dashed, thick, domain=2:16, samples=2, forget plot] {0.0322265625};

\addplot+ [color=black, thick, mark=diamond*] table [x=Limit, y=Ideal_Simulator, col sep=comma] {tikz/results/csv/datos_probabilidad.csv};
\addlegendentry{Ideal Simulator}
\addplot [color=black, dashed, thick, domain=2:16, samples=2, forget plot] {1.0};

\end{axis}
\end{tikzpicture}
    \caption{Expected State Probability ($1-E_\text{counts}$) as a function of the truncation threshold $t$ for register $Q = 18$. The ideal simulator converges to 1.0, whereas noisy backends degrade due to physical circuit depth. The horizontal dashed lines represent the baseline performance of the standard iQFT without thresholding for each respective backend.}
    \label{fig:expected_probability}
\end{figure}

For the ideal simulator, increasing the threshold $t$ monotonically improves the success probability. As $t$ grows, the algorithm transitions from a highly approximate state to a more exact transformation, reaching a 100\% success rate at $t=16$. Interestingly, at the most aggressive pruning level ($t=2$), the algorithmic error dominates the execution, yielding an ideal baseline probability that is nearly identical to those obtained from the noisy hardware models.

Conversely, the noisy backends exhibit a markedly inverse trend: as the threshold $t$ increases, the empirical success probability steadily degrades (i.e., the error $E_{\text{counts}}$ increases). This degradation is fundamentally driven by the severe mismatch between the algorithmic requirements of the distributed iQFT and the physical connectivity of the hardware.

\begin{figure}
    \centering
    \includegraphics[width=\linewidth]{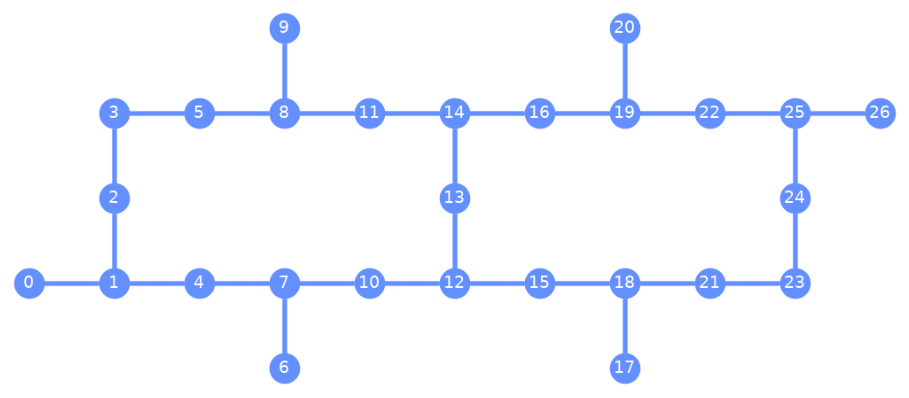}
    \caption{Topological coupling map of the three IBM Qiskit Fake backends utilized in the simulations (Hanoi, Montreal and Paris). The sparse, restricted connectivity imposes a heavy SWAP-gate penalty for two-qubit operations.}
    \label{fig:coupling_map}
\end{figure}

\begin{figure*}
    \centering
    \scriptsize
\begin{tikzpicture}
    \pgfplotsset{
        colormap={fire}{
            color(0cm)=(white);
            color(0.25cm)=(yellow!50!white);
            color(0.5cm)=(orange);
            color(0.75cm)=(red!90!orange);
            color(1cm)=(red!60!black)
        }
    }
    \begin{groupplot}[
        group style={
            group size=3 by 1,  
            horizontal sep=0.25cm, 
            x descriptions at=edge bottom, 
            y descriptions at=edge left,   
        },
        width=0.275\linewidth, height=0.275\linewidth, 
        scale only axis,
        xlabel={Qubits per Node ($Q$)},
        ylabel={Number of Nodes ($P$)},
        xmin=1.5, xmax=20.5, 
        ymin=1.5, ymax=19.5, 
        xtick={2,5,8,11,14,17,20},
        ytick={2,4,6,8,10,12,14,16,18},
        enlargelimits=false,
        axis on top,
        view={0}{90},
        point meta min=0, 
        point meta max=190,
    ]
        \nextgroupplot[title={No threshold ($t\rightarrow\infty$)}]       
        \addplot [matrix plot*, mesh/cols=19, mesh/ordering=x varies, point meta=explicit] 
        table [x=qubits_per_node, y=nodes, meta index=2, col sep=comma] {tikz/results/csv/eprs_per_node.csv};      
        \addplot [
            only marks, mark=none,
            nodes near coords,
            nodes near coords align=center,
            point meta=explicit,
            restrict expr to domain={\thisrowno{2}}{-1:99}, 
            nodes near coords style={
                font=\tiny,
                color=black, 
                /pgf/number format/fixed,
                /pgf/number format/precision=0,
            }
        ]
        table [x=qubits_per_node, y=nodes, meta index=2, col sep=comma] {tikz/results/csv/eprs_per_node.csv};
        
        \nextgroupplot[title={Threshold $t=7$}]       
        \addplot [matrix plot*, mesh/cols=19, mesh/ordering=x varies, point meta=explicit] 
        table [x=qubits_per_node, y=nodes, meta index=3, col sep=comma] {tikz/results/csv/eprs_per_node.csv};   
        \addplot [
            only marks, mark=none,
            nodes near coords,
            nodes near coords align=center,
            point meta=explicit,
            nodes near coords style={
                font=\tiny,
                color=black,
                /pgf/number format/fixed,
                /pgf/number format/precision=0,
            }
        ] 
        table [x=qubits_per_node, y=nodes, meta index=3, col sep=comma] {tikz/results/csv/eprs_per_node.csv};
        \nextgroupplot[
            title={Threshold $t=3$},
            colorbar, 
            colorbar style={
                title={EPR Pairs \\ per Node},          
                title style={
                    align=center,       
                    yshift=-2pt,         
                    font=\footnotesize, 
                    rotate=0            
                }
            }
        ]
        \addplot [matrix plot*, mesh/cols=19, mesh/ordering=x varies, point meta=explicit] 
        table [x=qubits_per_node, y=nodes, meta index=4, col sep=comma] {tikz/results/csv/eprs_per_node.csv};
        \addplot [
            only marks, mark=none,
            nodes near coords,
            nodes near coords align=center,
            point meta=explicit,
            nodes near coords style={
                font=\tiny,
                color=black,
                /pgf/number format/fixed,
                /pgf/number format/precision=0,
            }
        ] 
        table [x=qubits_per_node, y=nodes,  meta index=4, col sep=comma] {tikz/results/csv/eprs_per_node.csv};
    \end{groupplot}
\end{tikzpicture}
    \caption{Density of entanglement resources, quantified as the number of EPR pairs generated \textit{per computing node}, as a function of the local register size $Q$ and the total number of nodes $P$. Note that \textbf{higher values correspond to increased communication overhead and resource consumption}. The leftmost panel (Unbounded) shows a linear increase in local resource demand with respect to $P$, implying a quadratic global complexity. The central ($t=7$) and rightmost ($t=3$) panels demonstrate how the pruning threshold effectively saturates this cost, rendering the resource consumption independent of the network size for sufficiently large distributed systems.}
    \label{fig:results:epr_per_node}
\end{figure*}

As illustrated in Figure~\ref{fig:coupling_map}, the physical hardware features a sparse and restricted topological coupling map. However, the exact iQFT theoretically demands an all-to-all connectivity graph. When higher thresholds are applied, a larger number of controlled-phase gates are retained in the circuit. Transpiling these extended logical interactions to the restricted physical topology requires the insertion of extensive SWAP-gate routing networks. This routing drastically increases the physical circuit depth, amplifying the accumulation of depolarizing and readout errors, which quickly eclipse the theoretical gains in algorithmic fidelity.

It is important to emphasize that at $t = 16$, the point at which the ideal simulator attains perfect accuracy, the circuit becomes mathematically equivalent to the conventional, state-of-the-art iQFT. The degraded performance observed on noisy backends precisely at this value of $t$ therefore indicates the intrinsic fragility of the standard iQFT in NISQ backends, rather than any fundamental shortcoming of the proposed pruning strategy. Instead, these findings substantiate the essential role of the introduced communication horizon. By intentionally enforcing a more conservative threshold, the architecture drastically reduces the algorithmic demand for non-local interactions. On NISQ backends characterized by sparse coupling maps, this reduction translates directly into a significantly smaller physical circuit depth by avoiding extensive SWAP-gate routing. Consequently, the communication horizon suppresses the otherwise accumulation of physical noise, providing a critical operational trade-off that maximizes the robust extraction of computational information on NISQ hardware.

\subsection{Entanglement Resource Scaling}

Figure~\ref{fig:results:epr_per_node} presents a comparative analysis of the entanglement resources required for the distributed iQFT, expressed as the number of EPR pairs generated \textit{per node}. The results are depicted as heatmaps varying with the local system size ($Q$) and the number of computing nodes ($P$) for the unbounded case (no pruning) and two pruned configurations ($t=7$ and $t=3$).

It is crucial to emphasize that the metric displayed corresponds to the EPR pairs generated \textit{per individual node}. While the total entanglement consumption across the entire network would naturally scale with $P$, the per-node burden provides a more insightful measure of the local communication overhead. As observed in the leftmost panel (unbounded case), without pruning, the resource demand per node grows linearly with both $Q$ and $P$. This leads to a quadratic aggregate complexity $\mathcal{O}(P^2)$, where a system of $P=20$ nodes with $Q=20$ qubits necessitates $\approx180$ EPR pairs per node.

The central and rightmost panels demonstrate the efficacy of the proposed pruning strategy. By applying a filtering threshold $t$, this initially quadratic growth is effectively suppressed. Consistent with the theoretical analysis in Section~\ref{sec:methods:communication_efficiency}, the threshold acts as a cutoff for long-distance interactions. For stricter thresholds (e.g., $t=3$), the number of EPR pairs per node saturates rapidly, becoming independent of the system size $P$. Even a moderate threshold of $t=7$ caps the requirement at a maximum of approximately 7 EPR pairs per node, regardless of the network size. This transition from quadratic to linear global scaling (constant per node) transforms the all-to-all entanglement requirement into a localized resource pattern, shifting the bottleneck from entanglement generation rates to local gate fidelities and making the architecture highly scalable.

\subsection{Coupling Ratio and Node Decoupling}

\begin{figure*}[t]
    \centering
    \scriptsize
\begin{tikzpicture}
    \pgfplotsset{
        colormap={fire}{
            color(0cm)=(white);
            color(0.25cm)=(yellow!50!white);
            color(0.5cm)=(orange);
            color(0.75cm)=(red!90!orange);
            color(1cm)=(red!60!black)
        }
    }
    \begin{groupplot}[
        group style={
            group size=3 by 1,  
            horizontal sep=0.25cm, 
            x descriptions at=edge bottom, 
            y descriptions at=edge left,   
        },
        width=0.275\linewidth, height=0.275\linewidth, 
        scale only axis,
        xlabel={Qubits per Node ($Q$)},
        ylabel={Number of Nodes ($P$)},
        xmin=1.5, xmax=20.5, 
        ymin=1.5, ymax=19.5, 
        xtick={2,5,8,11,14,17,20},
        ytick={2,4,6,8,10,12,14,16,18},
        enlargelimits=false,
        axis on top,
        view={0}{90},
        point meta min=0, 
        point meta max=36,
    ]
        \nextgroupplot[title={No threshold ($t\rightarrow\infty$)}]       
        \addplot [matrix plot*, mesh/cols=19, mesh/ordering=x varies, point meta=explicit] 
        table [x=qubits_per_node, y=nodes, meta index=2, col sep=comma] {tikz/results/csv/cz_gates_growth_ratio.csv};      
        \addplot [
            only marks, mark=none,
            nodes near coords,
            nodes near coords align=center,
            point meta=explicit,
            restrict expr to domain={\thisrowno{2}}{-1:99}, 
            nodes near coords style={
                font=\tiny,
                color=black, 
                /pgf/number format/fixed,
                /pgf/number format/precision=0,
            }
        ]
        table [x=qubits_per_node, y=nodes, meta index=2, col sep=comma] {tikz/results/csv/cz_gates_growth_ratio.csv};
        
        \nextgroupplot[title={Threshold $t=7$}]       
        \addplot [matrix plot*, mesh/cols=19, mesh/ordering=x varies, point meta=explicit] 
        table [x=qubits_per_node, y=nodes, meta index=3, col sep=comma] {tikz/results/csv/cz_gates_growth_ratio.csv};   
        \addplot [
            only marks, mark=none,
            nodes near coords,
            nodes near coords align=center,
            point meta=explicit,
            nodes near coords style={
                font=\tiny,
                color=black,
                /pgf/number format/fixed,
                /pgf/number format/precision=0,
            }
        ] 
        table [x=qubits_per_node, y=nodes, meta index=3, col sep=comma] {tikz/results/csv/cz_gates_growth_ratio.csv};
        \nextgroupplot[
            title={Threshold $t=3$},
            colorbar, 
            colorbar style={
                title={$\eta$},          
                title style={
                    align=center,       
                    yshift=-2pt,         
                    font=\footnotesize, 
                    rotate=0            
                }
            }
        ]
        \addplot [matrix plot*, mesh/cols=19, mesh/ordering=x varies, point meta=explicit] 
        table [x=qubits_per_node, y=nodes, meta index=4, col sep=comma] {tikz/results/csv/cz_gates_growth_ratio.csv};
        \addplot [
            only marks, mark=none,
            nodes near coords,
            nodes near coords align=center,
            point meta=explicit,
            nodes near coords style={
                font=\tiny,
                color=black,
                /pgf/number format/fixed,
                /pgf/number format/precision=0,
            }
        ] 
        table [x=qubits_per_node, y=nodes,  meta index=4, col sep=comma] {tikz/results/csv/cz_gates_growth_ratio.csv};
    \end{groupplot}
\end{tikzpicture}
    \caption{Evolution of the Coupling Ratio $\eta$, defined as the ratio between remote and local controlled-phase gates ($\eta = N_{remote}/N_{local}$), as a function of the pruning threshold. Note that \textbf{higher values represent a higher dependency on inter-node communication}, which constitutes the primary bottleneck in distributed architectures. The plot demonstrates how the proposed pruning strategy successfully minimizes this ratio ($\eta \to 0$), effectively decoupling the nodes and shifting the computational load towards efficient local processing.}
    \label{fig:results:cz_gate_ratio}
\end{figure*}

To assess the computational efficiency of the distributed implementation relative to the communication overhead, we analyze the \textit{Coupling ratio} ($\eta$ defined in Eq.~\ref{eq:eta_locality_ratio}). In distributed quantum architectures, the quantum interconnect bandwidth and latency constitute the primary performance bottlenecks, often orders of magnitude slower than local gate operations. An ideal distributed algorithm should strive for $\eta \to 0$, indicating that the computational workload is dominated by local processing.

Figure~\ref{fig:results:cz_gate_ratio} illustrates the evolution of $\eta$ as a function of the pruning threshold $t$. In the standard, unpruned iQFT, the algorithm exhibits a high dependency on remote operations, with $\eta$ values reaching $\approx 35$ for large systems, indicating that inter-node communication dominates the execution time. 

A central contribution of this work is the establishment of the communication horizon, $d_\text{max}(Q,\varepsilon)$ (Eq.~\ref{eq:communication_horizon}), which acts as a decisive control parameter. Since the amplitude of the controlled-phase rotation $\theta_k = \pi/2^{k}$ decays exponentially with the logical distance $k$, the pruning threshold $t$ effectively imposes a hard limit on the interaction range. This mechanism preferentially filters out long-range remote operations typically corresponding to high $k$. The application of this optimization drastically suppresses the ratio. For thresholds $t \le 7$, $\eta$ approaches asymptotic zero ($\eta \to 0$). This reduction signifies a fundamental decoupling of the nodes: the computational workload is almost entirely confined to the local QPU, with inter-node links serving only for sparse, nearest-neighbor synchronization. This decoupling ensures that the latency of the quantum channel does not exponentially compound.

\subsection{Accuracy under Connectivity Constraints}

Figure~\ref{fig:results:dmax} depicts the impact of imposing a fixed communication horizon $d_\text{max}$ on the algorithmic accuracy. The heatmaps quantify the theoretical error bound defined in Eq.~\ref{eq:error} as a function of the local node capacity ($Q$) and the maximum allowed communication distance ($d_\text{max}$ defined in Eq.~\ref{eq:communication_horizon}).

The concept of the communication horizon emerges as a decisive parameter for distributed architecture design. As evidenced by the rapid error convergence in the heatmap, the algorithmic error decays precipitously as the communication horizon increases. Even for moderate horizon values, the fidelity of the distributed iQFT remains extremely high. For instance, in a system with local capacity $Q=9$ qubits per node, restricting the interaction range to just $d_\text{max}=2$ neighboring nodes yields an error bound of approximately $2^{-10}$. Extending connectivity beyond a small neighborhood provides diminishing returns in accuracy while incurring significant communication costs. 

A fundamental property revealed by this analysis is the independence of the error bound from the total number of nodes $P$. The accuracy is determined solely by the local register size $Q$ and the communication horizon $d_\text{max}$. This implies that for any target precision $\epsilon$, the connectivity graph does not need to scale with the network size. A constant connectivity radius guarantees a specific error tolerance whether the network consists of $P=20$ or $P=100$ nodes. This independence is a prerequisite for scalable distributed quantum computing, as it prevents the communication overhead from diverging as the system grows.

\begin{figure}
    \centering
    \scriptsize
\begin{tikzpicture}
    \begin{axis}[
        scale only axis,
        width=0.68\linewidth, height=0.68\linewidth, 
        title={Heatmap of $\varepsilon=2^{-i}$},
        ylabel={Qubits per node ($Q$)},
        xlabel={Communication horizon ($d_\text{max}$)},
        xmin=1.5, xmax=14.5, 
        ymin=1.5, ymax=14.5,
        xtick={2,3,...,14}, 
        ytick={2,3,...,14},
        colorbar,
        colorbar style={
            ylabel={$t$},
            ytick={1, 20, 40, ..., 180},
            ylabel style={
                rotate=270,
                at={(axis description cs:0.55, 1.1)}, 
                anchor=north
            }
        },
        colormap={fire}{
            color(0cm)=(red!60!black);                 
            color(0.25cm)=(red!90!orange);    
            color(0.5cm)=(orange);              
            color(0.75cm)=(yellow!50!white);      
            color(1cm)=(white)           
        },
        point meta max=183, 
        point meta min=1,
        view={0}{90}, 
        enlargelimits=false,
        axis on top, 
    ]
        \addplot [
            matrix plot*, 
            mesh/rows=14, 
            mesh/ordering=y varies, 
            point meta=explicit, 
        ] table [
            x=q, 
            y=N_prime, 
            meta=log2_epsilon, 
            col sep=comma
        ] {tikz/results/csv/epsilon_heatmap_log.csv};
        \addplot [
            only marks,
            mark=none,
            nodes near coords,
            nodes near coords align=center,
            visualization depends on={value \thisrow{log2_epsilon} \as \myval},
            nodes near coords style={
                /utils/exec={
                    \pgfmathparse{\myval < 100 ? "white" : "black"} 
                    \edef\mycolor{\pgfmathresult} 
                    \pgfmathparse{\myval < 100 ? 1 : 0}
                    \ifnum\pgfmathresult=1
                        \pgfkeysalso{font=\scriptsize\bfseries, text=\mycolor}
                    \else
                        \pgfkeysalso{font=\tiny\bfseries, text=\mycolor}
                    \fi
                },
                /pgf/number format/fixed,
                /pgf/number format/precision=0
            },
            point meta=explicit
        ] table [
            x=N_prime, 
            y=q, 
            meta=log2_epsilon, 
            col sep=comma
        ] {tikz/results/csv/epsilon_heatmap_log.csv};
    \end{axis}
\end{tikzpicture}
    \caption{Impact of the communication horizon $d_\text{max}$ and local register size $Q$ on the algorithmic accuracy. The values within the cells represent the \textbf{negated exponent} $i$ of the theoretical error bound (such that $\varepsilon \approx 2^{-i}$). Therefore, \textbf{higher values indicate lower errors and higher fidelity}. A rapid convergence is observed, showing that small horizons are sufficient to achieve high precision.}
    \label{fig:results:dmax}
\end{figure}

\subsection{Communication Overhead}

\begin{figure*}
    \centering
    \scriptsize
\begin{tikzpicture}
    \pgfplotsset{
        colormap={fire}{
            color(0cm)=(white);
            color(0.25cm)=(yellow!50!white);
            color(0.5cm)=(orange);
            color(0.75cm)=(red!90!orange);
            color(1cm)=(red!60!black)
        }
    }
    \begin{groupplot}[
        group style={
            group size=3 by 1,  
            horizontal sep=0.25cm, 
            x descriptions at=edge bottom, 
            y descriptions at=edge left,   
        },
        width=0.275\linewidth, height=0.275\linewidth, 
        scale only axis,
        xlabel={Qubits per Node ($Q$)},
        ylabel={Number of Nodes ($P$)},
        xmin=1.5, xmax=20.5, 
        ymin=1.5, ymax=19.5, 
        xtick={2,5,8,11,14,17,20},
        ytick={2,4,6,8,10,12,14,16,18},
        enlargelimits=false,
        axis on top,
        view={0}{90},
        point meta min=0, 
        point meta max=4,
    ]
        \nextgroupplot[title={No threshold ($t\rightarrow\infty$)}]       
        \addplot [matrix plot*, mesh/cols=19, mesh/ordering=x varies, point meta=explicit] 
        table [x=qubits_per_node, y=nodes, meta index=2, col sep=comma] {tikz/results/csv/communication_overhead.csv};      
        \addplot [
            only marks, mark=none,
            nodes near coords,
            nodes near coords align=center,
            point meta=explicit,
            restrict expr to domain={\thisrowno{2}}{-1:99}, 
            nodes near coords style={
                font=\tiny,
                color=black, 
                /pgf/number format/fixed,
                /pgf/number format/precision=0,
            }
        ]
        table [x=qubits_per_node, y=nodes, meta index=2, col sep=comma] {tikz/results/csv/communication_overhead.csv};
        
        \nextgroupplot[title={Threshold $t=7$}]       
        \addplot [matrix plot*, mesh/cols=19, mesh/ordering=x varies, point meta=explicit] 
        table [x=qubits_per_node, y=nodes, meta index=3, col sep=comma] {tikz/results/csv/communication_overhead.csv};   
        \addplot [
            only marks, mark=none,
            nodes near coords,
            nodes near coords align=center,
            point meta=explicit,
            nodes near coords style={
                font=\tiny,
                color=black,
                /pgf/number format/fixed,
                /pgf/number format/precision=0,
            }
        ] 
        table [x=qubits_per_node, y=nodes, meta index=3, col sep=comma] {tikz/results/csv/communication_overhead.csv};
        \nextgroupplot[
            title={Threshold $t=3$},
            colorbar, 
            colorbar style={
                title={$\gamma$},          
                title style={
                    align=center,       
                    yshift=-2pt,         
                    font=\footnotesize, 
                    rotate=0            
                }
            }
        ]
        \addplot [matrix plot*, mesh/cols=19, mesh/ordering=x varies, point meta=explicit] 
        table [x=qubits_per_node, y=nodes, meta index=4, col sep=comma] {tikz/results/csv/communication_overhead.csv};
        \addplot [
            only marks, mark=none,
            nodes near coords,
            nodes near coords align=center,
            point meta=explicit,
            nodes near coords style={
                font=\tiny,
                color=black,
                /pgf/number format/fixed,
                /pgf/number format/precision=0,
            }
        ] 
        table [x=qubits_per_node, y=nodes,  meta index=4, col sep=comma] {tikz/results/csv/communication_overhead.csv};
    \end{groupplot}
\end{tikzpicture}
    \caption{Global Communication Overhead $\gamma$, defined as the ratio between the number of auxiliary communication operations (entanglement generation, swapping, and corrections) and the logical computational gates ($N_{\text{comm}}/N_\text{comp}$). This metric indicates the multiplicative factor by which the communication burden exceeds the useful processing work. Note that \textbf{higher values are detrimental}, representing a regime where the system execution is dominated by network maintenance rather than algorithmic progress.}   
    \label{fig:results:overhead}
\end{figure*}

Figure~\ref{fig:results:overhead} analyzes the Communication Overhead ratio ($\gamma$) across different system sizes and pruning configurations. A crucial observation from these results is that the overhead factor remains relatively invariant with respect to the pruning threshold $t$.

This stability implies that the application of the threshold does not alter the fundamental architectural efficiency of the distributed protocol. The rationale lies in the fixed operational cost of the \textit{Telegate} mechanism: each executed remote Controlled-Phase gate necessitates a specific, proportional set of auxiliary communication operations (Bell pair generation, entanglement swapping, and corrections). Therefore, when the pruning strategy removes a logical gate, it simultaneously eliminates the associated communication burden.

Consequently, while the \textit{total count} of operations decreases drastically, the \textit{ratio} between communication and computation tasks ($\gamma$) remains stable. This confirms that the overhead is an intrinsic property of the chosen quantum interconnect protocol and is not artificially inflated or distorted by the pruning approximation.

\subsection{Comparison with State-of-the-Art and Theoretical Bounds}

To contextualize the impact of the proposed architecture, Table~\ref{tab:literature_comparison} summarizes the communication complexity and core strategies of existing distributed QFT/iQFT frameworks.

\begin{table}[h]
    \centering
    \renewcommand{\arraystretch}{1.5}
\footnotesize
\begin{tabular}{@{}lclc@{}}
\toprule
\textbf{Reference}                                                     & \textbf{Paradigm} & \textbf{\begin{tabular}[c]{@{}l@{}}Key\\ Concept\end{tabular}}                    & \textbf{\begin{tabular}[c]{@{}c@{}}Comm. \\ Scaling\end{tabular}} \\ \addlinespace \midrule
\begin{tabular}[c]{@{}l@{}}Standard Distributed \\ iQFT\end{tabular}              & Exact             & \begin{tabular}[c]{@{}l@{}}Baseline all-to-all\\ mapping\end{tabular}             & $\mathcal{O}(P^2)$                                                \\ \addlinespace
\begin{tabular}[c]{@{}l@{}}Compiler Batching \\ \cite{Kaur2025OptimizedQuantumCircuit}\end{tabular}      & Exact             & \begin{tabular}[c]{@{}l@{}}Remote-gate batching\\ and scheduling\end{tabular}     & $\mathcal{O}(P^2)$                                                \\ \addlinespace
\begin{tabular}[c]{@{}l@{}}Theoretical Bound \\ \cite{Ebnenasir2025LowerBounds}\end{tabular} & Exact             & \begin{tabular}[c]{@{}l@{}}Lower bound proof \\ for clique networks\end{tabular}  & $\Omega(P^2)$                                                     \\ \addlinespace
Proposed work                                                                     & Approximate       & \begin{tabular}[c]{@{}l@{}}Communication \\ horizon ($d_\text{max}$)\end{tabular} & $\mathcal{O}(P)$                                         \\ \bottomrule
\end{tabular}%
    \caption{Comparison of distributed QFT/iQFT approaches and communication scaling across $P$ nodes.}
    \label{tab:literature_comparison}
\end{table}

An analysis of the distributed quantum computing literature reveals that existing solutions operate under distinct methodological paradigms. In this context, the work of Ebnenasir \cite{Ebnenasir2025LowerBounds} constitutes a fundamental theoretical predecessor that establishes the strict lower bounds for the communication costs of distributing the exact QFT on clique networks. This study rigorously proves that any exact implementation is mathematically constrained to an aggregate communication complexity of $\Omega(P^2)$ due to the dense, all-to-all logical connectivity of the circuit. The linear $\mathcal{O}(P)$ communication scaling achieved by our proposed communication-efficient iQFT is fully consistent with Ebnenasir's theoretical limit and does not violate it, since his theorem applies strictly to the exact formulation. By strategically transitioning to an approximate transform bounded by a fractional phase tolerance $\varepsilon$, the introduction of the communication horizon ($d_\text{max}$) deliberately circumvents this mathematical barrier. This demonstrates that long-range interactions are structurally dispensable, successfully localizing the network overhead.

Conversely, this algorithmic reduction is fundamentally complementary to the compiler-level optimization frameworks proposed by Kaur et al. \cite{Kaur2025OptimizedQuantumCircuit}. The approach by Kaur et al. focuses on mitigating network latency by batching and parallelizing the mandatory inter-node interactions required by the circuit. In contrast, our pruning strategy intervenes at an earlier stage of the software stack, drastically reducing the absolute volume of non-local operations before the circuit is ever processed by the scheduler. Consequently, both methodologies are not mutually exclusive but highly synergetic: an optimized compiler can seamlessly apply remote-gate batching techniques over the circuit already simplified by our communication horizon, further maximizing the global efficiency of the distributed architecture.
\section{Conclusions}
\label{sec:conclusions}

In this work, we have presented a communication-aware optimization of the Inverse Quantum Fourier Transform (iQFT) designed specifically for distributed quantum architectures. By rigorously analyzing the geometric decay of the controlled-phase rotation angles, we introduced a pruning strategy that transforms the algorithm's connectivity requirements from a global all-to-all topology to a bounded nearest-neighbor interaction model.

Our theoretical and empirical analyses demonstrate that establishing a communication horizon, $d_\text{max}$, allows for a drastic reduction in the consumption of entanglement resources. Specifically, we showed that the number of EPR pairs per node transitions from a linear scaling $\mathcal{O}(P)$ with respect to the network size to a constant saturation level $\mathcal{O}(1)$, effectively making the resource overhead independent of the total number of nodes in the system. Furthermore, the analysis of the Coupling Ratio ($\eta$) confirms that our method successfully decouples the processing nodes, reducing the dependency on high-latency inter-node links to near-zero values while maintaining algorithmic fidelity well within the theoretical error bounds derived.

Ultimately, this approach provides a scalable pathway for implementing large-scale Quantum Phase Estimation and other Fourier-based primitives on near-term networked quantum processors, offering a tunable trade-off between execution fidelity and communication bandwidth.

\subsection{Future Work}
While the current study utilized state-vector simulation to validate the theoretical error models and resource scaling exactitude, several key areas for future investigation. First, it would be valuable to extend the validation of this algorithm to specialized distributed quantum computing (DQC) simulators such as NetQMPI, CUNQA, or CUDA-Quantum. These platforms would allow for a more granular analysis of network-specific effects, including link latency, generation rates, and the impact of classical control delays on decoherence.

Additionally, future work should address the integration of this optimized iQFT schedule with quantum error correction codes suitable for distributed environments, such as surface codes over noisy interconnects, to further suppress the effective error rates in the presence of realistic channel noise.

{\small
\section*{Declaration of generative AI and AI-assisted technologies in the writing process}

During the preparation of this work the authors used ChatGPT and Gemini in order to improve language and readability. After using this tool, the authors reviewed and edited the content as needed and take full responsibility for the content of the publication.
}

{\small
\section*{Acknowledgments} 
This work was funded by the European Union EuroHPC program with grant agreement 101194491 and by MICIU/AEI/10.13039/501100011033 with project number PCI2025-163229, the Agencia Estatal de Investigación (Spain) (PID2022-141623NB-I00), the Consellería de Cultura, Educación e Ordenación Universitaria Galician Research Center accreditation 2024–2027 ED431G-2023/04, and the European Regional Development Fund (ERDF).
}

\bibliographystyle{elsarticle-num}
\UseRawInputEncoding
\bibliography{references}

\end{document}